\documentclass[dvipsnames,letterpaper,11pt]{article}
\usepackage[margin=1in]{geometry}
\usepackage[utf8]{inputenc}
\usepackage{graphicx,fullpage,paralist}
\usepackage{amsmath,amssymb,tabu,amsthm}
\usepackage{dsfont}
\usepackage{bm}
\usepackage{array}
\newcolumntype{C}{>{$}c<{$}}
\usepackage{comment,hyperref}
\usepackage{booktabs}
\usepackage{subcaption}
\usepackage[font={footnotesize}]{caption}
\usepackage{mathtools}
\usepackage{thmtools}
\usepackage{tikz}
\usepackage{tikz-network}
\usepackage{pgfplots}
\usepackage{varwidth}
\usepackage{xstring}
\pgfplotsset{compat=1.10}
\usepgfplotslibrary{fillbetween}
\usetikzlibrary{backgrounds}
\usetikzlibrary{patterns}
\usetikzlibrary{arrows}
\usetikzlibrary{calc}
\usetikzlibrary{matrix,decorations.pathreplacing,positioning,fit}
\usepackage{enumitem}
\usepackage{blkarray}
\usepackage{csquotes}
\usepackage{stmaryrd}
\usepackage[skip=3pt, indent=20pt]{parskip}

\usepackage{anyfontsize}
%\usepackage{newtxtext,newtxmath}
% Special commands
\newcommand{\PP}{\mathbb{P}}

\newcommand{\calD}{\mathcal{D}}

\newcommand{\RR}{\mathbb{R}}

\newcommand{\EE}{\mathbb{E}}

\DeclareMathOperator{\OPT}{\normalfont\text{OPT}}
\DeclareMathOperator{\opt}{\normalfont\text{OPT}}

\DeclareMathOperator{\ALG}{\normalfont\text{ALG}}

% Environments

\newtheorem{theorem}{Theorem}
\newtheorem{lemma}{Lemma}

% Comments
%\setlength{\marginparwidth}{1in}
\usepackage[textsize=tiny,textwidth=2cm]{todonotes}

%Algorithms
\usepackage{multirow}
\usepackage[noend]{algpseudocode}
\usepackage{algorithm,algorithmicx}

\newlength{\algofontsize}
\setlength{\algofontsize}{6pt}
\hypersetup{
	colorlinks,
	linkcolor={red!50!black},
	citecolor={blue!50!black},
	urlcolor={blue!80!black}
}
\begin{document}
\algrenewcommand\algorithmicrequire{\textbf{Input:}}
\algrenewcommand\algorithmicensure{\textbf{Output:}}
	
\title{Prophet Upper Bounds for Online Matching and Auctions
\vspace{.5cm}
%	 \thanks{This work was partially supported by blablabla}
}
	
\author{
Jos\'e Soto			
\thanks{Department of Mathematical Engineering, Universidad de Chile.}
\thanks{Center for Mathematical Modeling IRL-CNRS 2807, Universidad de Chile.}
\and Victor Verdugo
\thanks{Institute for Mathematical and Computational Engineering, Pontificia Universidad Católica de Chile.}
\thanks{Department of Industrial and Systems Engineering, Pontificia Universidad Católica de Chile.}
}

\date{}

\maketitle
\thispagestyle{empty}
\begin{abstract}
In the online 2-bounded auction problem, we have a collection of items represented as nodes in a graph and bundles of size two represented by edges. Agents are presented sequentially, each with a random weight function over the bundles. The goal of the decision-maker is to find an allocation of bundles to agents of maximum weight so that every item is assigned at most once, i.e., the solution is a matching in the graph. When the agents are single-minded (i.e., put all the weight in a single bundle), we recover the maximum weight prophet matching problem under edge-arrivals (a.k.a. prophet matching). 

In this work, we provide new and improved upper bounds on the competitiveness achievable by an algorithm for the general online 2-bounded auction and the (single-minded) prophet matching problems. For adversarial arrival order of the agents, we show that no algorithm for the online 2-bounded auction problem achieves a competitiveness larger than $4/11$, while no algorithm for prophet matching achieves a competitiveness larger than $\approx 0.4189$. Using a continuous-time analysis, we also improve the known bounds for online 2-bounded auctions for random order arrivals to $\approx 0.5968$ in the general case, a bound of $\approx 0.6867$ in the IID model, and $\approx 0.6714$ in prophet-secretary model.
\end{abstract}
\thispagestyle{empty}
\newpage

\section{Introduction}

In many collective platforms and markets, decisions must be made on the fly, i.e., online. Marketplaces that customize prices dynamically and ridesharing platforms that match vehicles to users are two examples of online decision-making platforms. The policies that the platform can adopt should be fast and efficient and typically designed with limited information. Under this scenario, matchings and auctions are not just market design cornerstones but the very essence of the operation of many modern online platforms. Solutions proposed for school choice platforms, organ donation, ridesharing, recommender systems (such as those employed by Meta and Netflix), and job markets are all fundamentally based on matching policies. Similarly, dynamic pricing and allocation in collective market platforms such as Amazon, Mercado Libre, and eBay can be effectively modeled via combinatorial auctions, further underscoring the central role of these problems.

The prophet inequality, nowadays a central paradigm in the stochastic online algorithms literature, has been extensively studied during the last decade in the algorithms community~\cite{GM66,Hill1982,Kertz1986,SC83}. In its most basic single-choice form, a sequence of random values independently chosen from known distributions is presented sequentially, and the decision-maker must choose when to accept a value, making an irrevocable accept-or-reject decision. This stochastic selection setting has been extended to more complex combinatorial environments, such as matchings and auctions, to capture more involved constraints on the solution constructed by the decision-maker~\cite{Ehsani2018,Ezra2020OnlineModels,kleinberg2019matroid,Rubinstein2016}. More specifically, in the {\it 2-bounded auction problem}, we have a collection of items represented as nodes in a graph and bundles of size two represented by edges. Agents are presented sequentially, each with a random weight function over the item bundles. These weight functions are chosen from independent distributions, one per agent. The goal of the decision-maker is to find an assignment of bundles to agents of maximum weight so that every item is assigned at most once, i.e., the solution is a matching in the graph. When the agents are single-minded (i.e., put all the weight in a single bundle), we recover the maximum weight matching problem. Generally, combinatorial auctions is a central research topic in the intersection of algorithms and economics and has been intensively studied from an approximation and mechanism design perspective~\cite{assadi2021improved,assadi2020improved,babaioff2014efficiency,baldwin2019understanding,correa2023constant,dobzinski2016breaking,dobzinski2005approximation,feldman2013combinatorial,marinkovic2024online,paes2020computing}.

In this work, we study the approximation limits for the 2-bounded auction problem in the prophet model. We consider several variants regarding the online arrival regime of the agents: the classic adversarial arrival model, the prophet-secretary model (i.e., random arrival order), and the prophet IID model (i.e., identical weight distributions). In all these models, an algorithm is $\alpha$-competitive
if it returns a solution for the 2-bounded auction problem with a total expected
weight of at least an $\alpha$-fraction of the expected optimal {\it offline} weight, i.e., the one where the decision-maker has full information about the realizations. We provide new upper bounds on the worst-case competitiveness for the 2-bounded auction problem and the (single-minded) maximum weight matching problem in the prophet model (for short, prophet matching), including both adversarial and random order arrival models.

\subsection{Our Results}

We first consider the prophet model with general distributions, i.e., with an adversarial arrival order of the agents.
This model's previously best-known upper bound is $3/7\approx 0.42857$ by Correa et al. \cite{Correa2022CFPW}, which holds even in the single-minded (matching) case. 
Using a very simple instance, we improve this upper bound in the 2-bounded auction setting to $4/11\approx 0.363636$.
Our next upper bound is for the single-minded case, i.e., the prophet matching problem. 
%Recall that in this model, we are given a graph $G$ with edges $E$, every edge $e\in E$ corresponds to an agent, and every agent has a weight $w(e)$ distributed according to $D_e$ and the weights are independent across the edges. 
We provide a new upper bound that improves the $3/7\approx 0.42857$ from \cite{Correa2022CFPW} down to $0.418928$.
%These results are summarized in the following theorem.

\begin{theorem}\label{thm:upper-bounds-matching-adversarial}
The following upper bounds hold for the 2-bounded auction problem:
\begin{enumerate}[itemsep=0pt,label=\normalfont{(\alph*)}]
    \item No algorithm for the 2-bounded auction problem in the prophet model achieves a competitiveness larger than $4/11\approx 0.363636$.\label{ub-matching-a}
    \item No algorithm for the prophet matching problem achieves a competitiveness larger than $0.418928$.\label{ub-matching-b}
\end{enumerate}
\end{theorem}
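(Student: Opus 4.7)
The strategy for both upper bounds is the classical adversarial construction: exhibit an explicit instance $\mathcal{I}$ of the relevant model, compute $\EE[\OPT(\mathcal{I})]$, and show that $\EE[\ALG(\mathcal{I})]$ is at most the target fraction of $\EE[\OPT(\mathcal{I})]$ for \emph{every} online policy $\ALG$. The advantage of keeping $\mathcal{I}$ small is that the decision tree of any policy is very shallow, so the ``no algorithm beats this ratio'' direction reduces to a finite case analysis rather than a general structural argument.

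For part~\ref{ub-matching-a}, the phrase ``very simple instance'' suggests a graph on just a handful of vertices. My plan is to take a small graph together with one multi-minded agent that places positive weight on two bundles, followed by a small number of single-minded agents whose weights come from a heavy-tailed two-point distribution supported on $\{0,M\}$ with $\PP[M] = 1/M$ for a large scale parameter $M$. As $M \to \infty$, the offline optimum is dominated by the expected number of large realizations together with the deterministic contribution of the multi-minded agent, while any online policy facing the multi-minded agent has only a constant number of available actions (accept each of its two bundles, or skip). Each branch contributes a closed-form amount to $\EE[\ALG]$, and solving for the worst-case balance among the branches, optimized over the deterministic weights, should yield exactly $4/11$ in the limit. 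It is crucial here that multi-mindedness creates a three-way irrevocable choice with no symmetric escape, which is exactly what drives the ratio below $3/7$.

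For part~\ref{ub-matching-b}, the absence of multi-minded agents forces the adversary to be subtler, since the ``either/or'' commitment that multi-mindedness provides for free must now be simulated by several single-minded agents on a richer graph. My plan is to enrich the $3/7$-witnessing family of~\cite{Correa2022CFPW} by introducing extra random-weight edges, or by perturbing its distributions with an additional continuous parameter. The analysis of the optimal online algorithm should then reduce to a small numerical optimization, and the constant $0.418928$ is most likely the root of an equation that balances an ``accept early'' branch of the algorithm against a ``wait for the large realization'' branch. A one- or two-parameter tuning of the construction should saturate this indifference condition exactly.

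The main obstacle in both parts is the upper bound on the best \emph{adaptive} online algorithm: one must argue that no policy, however clever, can exceed the target ratio on~$\mathcal{I}$. In part~(a), the smallness of the instance tames this obstacle by reducing the analysis to a finite enumeration of the actions available at each arrival, together with an optimization over the deterministic weight parameters. In part~(b), the larger and more continuous instance requires first proving a monotonicity or symmetry property that restricts attention to a simple family of threshold policies, after which the ratio becomes the optimum of a low-dimensional optimization whose numerical value is the claimed $0.418928$.
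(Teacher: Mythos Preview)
Your high-level strategy is correct for both parts, but the concrete instances you sketch do not match the paper's and, more importantly, the plan for~(a) misses the mechanism that actually produces $4/11$.

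For part~\ref{ub-matching-a}, the paper's instance is a $4$-cycle on $\{a,b,c,d\}$ together with the diagonal $\{a,c\}$ and \emph{three} agents. Agent~1 is multi-minded and puts weight~$3$ on \emph{all four} cycle edges (not two); Agent~2 is single-minded and puts weight~$4$ on \emph{one} cycle edge chosen uniformly at random; Agent~3 is the heavy-tailed agent on the diagonal. The crucial point is that every cycle edge is adjacent to the diagonal, so assigning anything to Agent~1 forfeits Agent~3's contribution, and Agent~2's random choice is the opposite edge only with probability~$1/4$. This makes each of the three online branches (skip both, skip Agent~1 only, take from Agent~1) worth exactly~$4$, while $\EE[\OPT]\to 11$. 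Your plan has the multi-minded agent on only two bundles and \emph{all} later agents heavy-tailed; this omits the ``moderate-weight random edge'' agent (Agent~2), which is precisely the device that penalizes an early commitment with probability~$3/4$. Without it, the three-way indifference that pins $\ALG$ at~$4$ does not arise, and the resulting ratio will not be $4/11$.

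For part~\ref{ub-matching-b}, your description (``perturb the $3/7$ family by an extra parameter'') is too vague to be a proof and does not anticipate the actual construction. The paper uses a fresh six-vertex graph: $K_{4,2}$ with sides $\{a,b,c,d\}$ and $\{0,1\}$, plus the extra edge $\{0,1\}$. The eight bipartite edges arrive in four blocks with carefully coupled two-point distributions governed by parameters $p,q,r$ satisfying $r=(1-p)q+(1-q)p$, so that \emph{every} branch of the online decision tree yields expected weight exactly~$1$; the final edge $\{0,1\}$ is the heavy-tailed one. The bound $0.418928$ is then $1/F(p^*,q^*)$, where $F$ is the expected offline optimum as an explicit polynomial in $(p,q)$, maximized numerically. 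Your proposal correctly guesses that the constant comes from a low-dimensional optimization balancing branches, but it does not supply the graph, the arrival order, or the constraint $r=(1-p)q+(1-q)p$ that forces the indifference; these are the substantive ideas, not just bookkeeping.
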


We prove Theorem \ref{thm:upper-bounds-matching-adversarial} in Section \ref{sec:2bca-adversarial}. In our next set of results, we address the $2$-bounded auction problem for random arrival orders.
For the single-choice prophet-secretary problem,
%(specifically, single-minded, $1$BCA),
Correa et al.~\cite{CorreaSZ2021} established an upper bound of $\sqrt{3} - 1 \approx 0.732$, and recently,  Bubna and Chiplunkar \cite{bubna2023prophet} improved this bound to $0.7254$. This also represented the previous best upper bound for prophet-secretary matching. We further reduce this upper bound for the prophet-secretary matching problem to $0.671355$. Our result also improves upon the previous best upper bound for the more challenging {\it single-sample} data-driven variant of prophet-secretary matching, which was the $\ln(2) \approx 0.6931$ construction by Correa et al. \cite{Correa2019ProphetDistribution} for the single-minded prophet IID problem with unknown distribution, which can be transformed into an upper bound for the single-sample prophet secretary model.

In the case of general 2-bounded auctions in the prophet secretary models and single-sample prophet secretary, the previous best upper bounds were $0.703$ by Huang et al.~\cite{HuangSY2022} and the $\ln(2) \approx 0.6931$ bound~\cite{Correa2019ProphetDistribution}. 
We improve both bounds to $0.596774$.
Finally, for general 2-bounded auctions in the IID model, the best-known upper bound was still the $0.7451$ bound from Hill and Kertz \cite{Hill1982} and Kertz \cite{Kertz1986}, which applies to single-minded agents and single-choice. We enhance this bound for 2-bounded auctions to $0.686641$. These results are summarized in the following theorem.

\begin{theorem}\label{thm:upper-bounds-matching}
The following upper bounds hold for the 2-bounded auction problem:
\begin{enumerate}[itemsep=0pt,label=\normalfont{(\alph*)}]
    \item No algorithm for the prophet-secretary matching problem achieves a competitiveness larger than $0.671355$.\label{ro-ub-matching-a}
    \item No algorithm for the 2-bounded auction problem in the prophet-secretary model achieves a competitiveness larger than $0.596774$.\label{ro-ub-matching-b}
    \item No algorithm for the 2-bounded auction problem in the prophet \normalfont\text{IID} model achieves a competitiveness larger than $0.686641$.\label{ro-ub-matching-c}
\end{enumerate}
\end{theorem}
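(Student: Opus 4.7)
The plan is to prove all three parts within a unified continuous-time framework. I embed the random arrival process of $n$ agents into $[0,1]$ by assigning each agent an i.i.d.\ timestamp (exactly uniform in the prophet-secretary cases; handled symmetrically via distributional identity in the IID case), and then pass to the limit $n\to\infty$. Under this embedding, any online algorithm running on a fixed hard instance is characterized by a small collection of nonincreasing functions $f_e(t)$ giving the probability that edge $e$ is still available (both endpoints free) at time $t$. Adjacency imposes constraints of the form $f_e(t)+f_{e'}(t)\le 1$ for edges $e,e'$ sharing a vertex, with boundary conditions $f_e(0)=1$. The ratio $\EE[\ALG]/\EE[\OPT]$ then becomes a functional of $\{f_e\}$, and an upper bound follows by exhibiting an instance in which the supremum of this functional over the feasible $\{f_e\}$ is at most the claimed constant.

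For part \ref{ro-ub-matching-a}, I would use a small graph---likely a short path or star---whose edge distributions are tuned so that the algorithm faces a sharp wait-versus-commit tradeoff at each arrival time. Writing $\EE[\ALG]$ as an integral of $f_e(t)$ weighted against the marginal contributions of the instance and comparing to a closed-form expression for $\EE[\OPT]$ yields a variational problem whose numerical maximum is $0.671355$.

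For part \ref{ro-ub-matching-b}, I would extend the construction by allowing each agent to place weight on two adjacent edges. This creates an irrevocable choice dilemma absent in the single-minded case: once the algorithm picks one of the two bundles, both of its items are blocked for the remainder of the horizon. The 2-bounded structure thus adds an extra adversarial degree of freedom per arrival in the variational problem, sharpening the optimum to $0.596774$. For part \ref{ro-ub-matching-c}, the IID symmetry prevents using time-varying distributions, so instead I would adopt a Hill--Kertz-style hard distribution (a heavy-tailed bulk together with a high-value atom) superimposed on a graph whose 2-bounded structure lets the adversary effectively hide the critical realization in a bundle adjacent to what the algorithm has already committed to. The analysis produces a nonlinear fixed-point equation whose numerical solution is $0.686641$.

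The main obstacle in each part is solving and certifying the optimum of the resulting variational or fixed-point problem. For \ref{ro-ub-matching-a} this reduces to KKT conditions on a low-dimensional problem; for \ref{ro-ub-matching-b} the extra adversarial variable enlarges the feasible region and demands a careful case analysis; and for \ref{ro-ub-matching-c} the nonlinearity of the fixed-point equation requires monotonicity arguments to certify that the numerical root is indeed the unique optimum. A secondary but nontrivial step is the discrete-to-continuous passage, which I would handle via standard concentration arguments for the random timestamps to justify taking $n\to\infty$ and to ensure that the limiting functional accurately captures the competitive ratio of any online algorithm on the parametric family.
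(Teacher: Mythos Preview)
Your proposal is not yet a proof: it omits precisely the step that carries all the content, namely the construction of a concrete hard instance. For an upper bound the instance \emph{is} the proof; everything else is bookkeeping. Saying you would use ``likely a short path or star'' or ``a Hill--Kertz-style hard distribution'' establishes no bound, and the paper's instances are more specific than that. For part~\ref{ro-ub-matching-a} the paper takes a path on $\{a,b,c,d\}$ whose middle edge $\{b,c\}$ is replicated $m$ times; the two outer edges carry deterministic unit-weight agents $L$ and $R$, and each middle copy carries an unlikely agent of weight $\lambda m$ with probability $1/m^2$. For part~\ref{ro-ub-matching-b} it uses a $4$-cycle plus a diagonal, with the lateral agents now revealing random \emph{pairs} of cycle edges (so stopping at the first one secures only $3/2$ in expectation rather than $2$). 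For part~\ref{ro-ub-matching-c} it keeps the $4$-cycle plus diagonal but makes every agent i.i.d.\ via two independent coins---one rare coin placing weight $\lambda m$ on the diagonal, one coin of probability $\theta/m$ placing unit weight on a random pair of cycle edges---and computes $\EE[\opt]$ via a separate combinatorial lemma on the $4$-cycle. Without these constructions your variational problems have no data to act on.

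Your analytic framework is also mis-stated. You assert that adjacency gives $f_e(t)+f_{e'}(t)\le 1$ together with $f_e(0)=1$; these contradict each other at $t=0$. The inequality you presumably have in mind applies to \emph{selection} probabilities, not availability probabilities, and in any case marginal availabilities do not determine $\EE[\ALG]$ because the correlations with the arrival times of the deterministic agents matter. The paper bypasses all of this: once the instance is fixed, the optimal online policy is a \emph{threshold} policy (take the $k$-th lateral agent iff its arrival time exceeds a threshold), so the supremum over algorithms collapses to maximizing an explicit elementary function in one or two real thresholds $s\le t$, after which one minimizes over the instance parameter~$\lambda$. For part~\ref{ro-ub-matching-c} the same idea yields an explicit integral equation for the value function $\alpha(t)$ that is solved in closed form, and the bound comes from minimizing over $(\lambda,\theta)$. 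No KKT conditions over function spaces, no OCRS-type adjacency constraints, and no Hill--Kertz fixed-point equation enter the argument.
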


We prove our new bounds from Theorem \ref{thm:upper-bounds-matching} in Section \ref{sec:2bca-ro}, using a continuous-time analysis to generate hard instances for the prophet-secretary and prophet IID settings. We present the main points of this approach in Section \ref{sec:rolling} by revisiting the $\sqrt{3}-1$ upper bound in \cite{CorreaSZ2021} for the single-choice prophet-secretary problem. 
Table~\ref{tab:table_1} summarizes our improved guarantees and the existing lower bounds (i.e., approximation guarantees).
\begin{table}[h!]
        %\footnotesize
        \centering
        {\begin{tabular}{|c|c|c|c|}
        \hline
             2-bounded auctions  & Arrival model & Lower bounds & Upper bounds\\ 
            \hline
            \multirow{3}{*}{General case} & Prophet & $1/3\approx 0.333$ \cite{Correa2022CFPW,ma2020approximation} & $\approx 0.363636$ (this work)\\ 
                                          & Prophet-secretary& $1/3\approx 0.333$ \cite{Correa2022CFPW,ma2020approximation,marinkovic2024online} & $\approx 0.596774$ (this work)\\ 
                                          & IID & $(1-e^{-2})/2\approx 0.432$ \cite{marinkovic2024online} & $\approx 0.686641$ (this work) \\
            \hline
            \multirow{3}{*}{Matching} & Prophet & $\approx 0.337$ \cite{Ezra2020OnlineModels} & $\approx 0.418928$ (this work)\\ 
                                      & Prophet-secretary& $\approx 0.474$ \cite{MacRury2023} & $\approx 0.671355$ (this work) \\ 
                                      & IID & $\approx 0.745^*$ \cite{correa2021prophet} & $\approx 0.745$ \cite{Hill1982,Kertz1986} \\
            \hline
        \end{tabular}}
        \caption{{Known lower and upper bounds for 2-bounded auctions. In the IID matching model, all agents are single-minded, want the same edge,  and share the same distribution. Therefore this model coincides with the classic IID single-choice prophet inequality.}}
        \label{tab:table_1}
    \end{table}

\section{Preliminaries}\label{sec:prelim}
Given a graph $G=(V,E)$ and a set $A$ of $m$ agents, a function $M\colon A\to E\, \cup \{\bot\}$ is a {\it feasible allocation} if two different agents cannot be assigned to the same edge, and $\{M(a):a\in A\}\setminus \{\bot\}$ is a matching in $G$. When $M(a)=\bot$, agent $a$ has not been assigned any edge $e\in E$.
Every agent $a$ has a weight function $w_a\colon E\cup \{\bot\}\to \RR_+$ with $w_a(\bot)=0$, and the 2-bounded auction problem corresponds to
\begin{equation}
\max\Big\{\textstyle\sum_{a\in A}w_a(M(a))\colon M\text{ is a feasible allocation }\Big\},\label{eq:optimal-assignment}
\end{equation}
where every node in the graph represents an item to be sold, while every edge represents a bundle of two items. Therefore, \eqref{eq:optimal-assignment} corresponds to finding the feasible allocation of bundles to agents that maximize the total weight.

We denote by $\opt$ the optimal value of \eqref{eq:optimal-assignment}.
We remark that \eqref{eq:optimal-assignment} corresponds to finding a maximum weight matching on 3-uniform hypergraphs, which is, in general, NP-hard.
We say that the agents are single-minded if for every agent $a$, there exists a single edge $e_a\in E$ such that $w_a(e_a)>0$, and $w_a(e)=0$ for every $e\ne e_a$.
With single-minded agents, \eqref{eq:optimal-assignment} corresponds to finding the maximum weight matching, which can be solved efficiently.
This problem has been extensively studied in the literature, including larger bundle sizes (see, e.g., \cite{KesselheimRTV13,dutting2020prophet,Correa2022CFPW}).

Every agent $a$ has a distribution $\calD_a$ over the weight functions.
In the \textbf{prophet model}, the agents arrive in an arbitrary (possibly adversarial) order $a_1,\ldots,a_m$, and each $a_t$, independently from the rest, upon arrival reveals a random weight function $r_{a_t}\sim \calD_{a_t}$. 
The decision maker decides irrevocably whether to assign agent $a_t$ to some edge $e\in E$, otherwise it is assigned to $\bot$.
The assignment of agents should be a feasible allocation.
We are in \textbf{prophet-secretary model} when the agents arrive in random order.
We are in the \textbf{prophet IID model} when the distributions are all identical.
An algorithm is $\gamma$-competitive if, for every instance, it constructs a feasible assignment with a total weight that is, on expectation, at least a $\gamma$ fraction of the expected optimal value for the problem \eqref{eq:optimal-assignment}.

\section{Upper Bounds for Adversarial Arrivals: Proof of Theorem \ref{thm:upper-bounds-matching-adversarial}}\label{sec:2bca-adversarial}

We start by showing our new upper bound for 2-bounded auctions in the prophet model. Consider the graph obtained by adding a diagonal edge $\{a,c\}$ to the 4-cycle $\{\{a,b\}$, $\{b,c\}$, $\{c,d\}$, $\{a,d\}\}$.
Let $\varepsilon>0$ and consider three agents with weights distributions as follows:

\begin{enumerate}[label=(\roman*)]
    \item Agent 1 puts a weight of $3$ on each edge $\{a,b\}, \{b,c\}, \{c,d\}, \{a,d\}$, and zero on the edge $\{a,c\}$.
    \item Agent 2 selects uniformly at random exactly one edge $x$ from $\{a,b\}, \{b,c\}, \{c,d\}, \{a,d\}$ and puts a weight of $4$ on that edge; and zero on the rest.
    \item Agent 3, with probability $\varepsilon>0$, puts a weight of $4/\varepsilon$ on the diagonal $\{a,c\}$ and zero everywhere else; and with probability $1-\varepsilon$ reveals the zero weight function.
\end{enumerate}
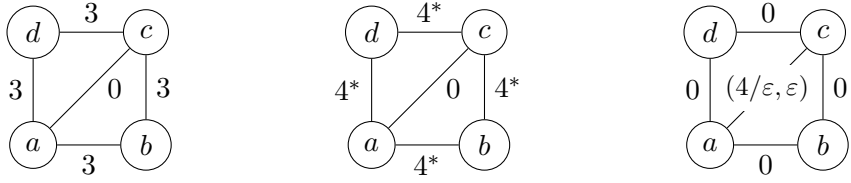
\begin{figure}[ht!]
\centering
\begin{tikzpicture}[scale=1.5]
  % Primer grafo
  \node[circle, draw]  (a1) at (0,0) {$a$};
  \node[circle, draw]  (b1) at (1,0) {$b$};
  \node[circle, draw]  (c1) at (1,1) {$c$};
  \node[circle, draw]  (d1) at (0,1) {$d$};

  \draw (a1) -- node[below] {3} (b1) -- node[right] {3} (c1) -- node[above] {3} (d1) -- node[left] {3} (a1); % 4-ciclo exterior
  \draw (a1) -- node[right, xshift=0.1cm] {0} (c1);%Diagonal

  % Segundo grafo
  \begin{scope}[xshift=3cm]
    \node[circle, draw]  (a2) at (0,0) {$a$};
    \node[circle, draw]  (b2) at (1,0) {$b$};
    \node[circle, draw]  (c2) at (1,1) {$c$};
    \node[circle, draw]  (d2) at (0,1) {$d$};

    \draw (a2) -- node[below] {$4^*$} (b2) -- node[right] {$4^*$} (c2) -- node[above] {$4^*$} (d2) -- node[left] {$4^*$} (a2); % 4-ciclo exterior
  \draw (a2) -- node[ right, xshift=0.1cm] {0} (c2); % Diagonal
 \end{scope}

  % Tercer grafo
  \begin{scope}[xshift=6cm]
    \node[circle, draw]  (a3) at (0,0) {$a$};
    \node[circle, draw]  (b3) at (1,0) {$b$};
    \node[circle, draw]  (c3) at (1,1) {$c$};
    \node[circle, draw]  (d3) at (0,1) {$d$};

   \draw (a3) -- node[below] {0} (b3) -- node[right] {0} (c3) -- node[above] {0} (d3) -- node[left] {0} (a3); % 4-ciclo exterior
  \draw (a3) -- node[midway,fill=white]{\small $(4/\varepsilon, \varepsilon)$} (c3); % Diagonal
  \end{scope}
\end{tikzpicture}
 \caption{Agent 2 selects exactly one edge marked with a star at random and shows it with a weight equal to 4. Agent 3 puts a weight of $4/\varepsilon$ on the diagonal edge with probability $\varepsilon$.}
\end{figure}

\begin{proof}[Proof of Theorem \ref{thm:upper-bounds-matching-adversarial}\ref{ub-matching-a}]
If $\varepsilon$ is small enough, then the expected optimal offline solution is computed as follows. If agent 3 places a weight of $4/\varepsilon$ on the diagonal, then the optimum matching consists only of that edge, and so $\opt=4/\varepsilon$. If not, the optimum matching consists of assigning to agent 2 the unique edge $x$ of weight $4$ that she presents and assigning to agent 1 the edge opposite to $x$ of weight $3$, so that $\opt=4 + 3=7$.  
In total, by tending $\varepsilon$ to zero, the optimal offline solution has an expected weight of
$\EE(\opt) = \varepsilon \cdot (4/\varepsilon) + (1-\varepsilon)\cdot 7 \to 11$ when $\varepsilon \to 0$.

On the other hand, no algorithm can achieve a weight greater than $4$ if the agents arrive in the order of their names. Indeed, the only choices for an algorithm that does not select zero-weight edges are the following:
\begin{enumerate}[label=(\roman*)]
    \item Assign nothing to the first and second agents and wait for agent 3. This has an expected weight $(4/\varepsilon) \cdot \varepsilon = 4$.
    \item Assign nothing to the first agent and assign something to the second one. This also has an expected weight of $4$.
    \item Assign one edge (any) to the first agent and try to complete the matching with the edge that the second agent presents (succeeding with probability 1/4). This strategy has an expected weight of $1/4 \cdot (3+4)  + 3/4 \cdot 3 = 4$.
\end{enumerate}
Therefore, the competitiveness of any algorithm is at most $4/\EE(\opt) = 4/11\approx 0.363636$.
\end{proof}

Now, we move to the prophet-matching setting. Consider the graph $G$ with vertices $\{0,1,a,b,c,d\}$ and nine edges, arriving in the following order: 
$$\{a,0\}, \{a,1\}, \{b,0\}, \{b,1\}, \{c,0\}, \{c,1\}, \{d,0\}, \{d,1\}, \{0,1\}.$$
That is, $G$ is given by the full bipartite graph with sides $\{a,b,c,d\}$ and $\{0,1\}$, plus the extra edge $\{0,1\}$ depicted in Figure \ref{fig:pm-graph}. Let $p, q, r \in [0,1]$ be three fixed real values to be determined later, and such that  
\begin{align}
r=(1-p)q+(1-q)p, \quad p<q<r<1/2, \quad 1-p<2r. \label{eqn:bound1}
\end{align}
Let $\varepsilon>0$, and consider the following distributions for the weights of each edge:
\begin{align*}
w(\{a,0\}) \text{ and } w(\{a,1\})&: \text{deterministically weight $r$}. \\
w(\{b,0\}) \text{ and } w(\{b,1\})&: \text{deterministically weight $1-r$}. \\
w(\{c,0\}) \text{ and } w(\{c,1\})&: \text{weight $1-p$ with probability $q$, and 0 otherwise}. \\
w(\{d,0\}) \text{ and } w(\{d,1\})&: \text{weight $1$ with probability $p$, and 0 otherwise}. \\
w(\{0,1\}) &: \text{weight $1/\varepsilon$ with prob. $\varepsilon$, and 0 otherwise.}
\end{align*}

\begin{figure}[h!] 
\centering
\begin{tikzpicture}[scale=.9]\small
  % Vertices
  \foreach \x/\y/\z in {a/4/0, b/2/2, c/2/4, d/4/6, 0/8/2, 1/8/4} {
    \node[circle, draw] (\x) at (\y, \z) {$\x$};
  }

  % Edges
      \draw (a) -- (0) node[pos=0.1, below right]
      {$(r,1)$};
      \draw (a) -- (1) node[pos=0.1, above left]
      {$(r,1)$};
      \draw (b) -- (0) node[pos=0.15, below]
      {$(1-r,1)$};
      \draw (b) -- (1) node[pos=0.15, above, xshift=-.5cm]
      {$(1-r,1)$};
 \draw (c) -- (0) node[pos=0.15, below, xshift=-.5cm]
      {$(1-p,q)$};
      \draw (c) -- (1) node[pos=0.15, above]
      {$(1-p,q)$};
 \draw (d) -- (0) node[pos=0.1, below left]
      {$(1,p)$};
      \draw (d) -- (1) node[pos=0.1, above right]
      {$(1,p)$};
  % Extra Edge
  \draw (0) -- (1) node[pos=0.5, right]
      {$(1/\varepsilon,\varepsilon)$};
\end{tikzpicture}
 \caption{Each edge of the graph is labeled with the pair defining its weight and distribution.}
 \label{fig:pm-graph}
\end{figure}
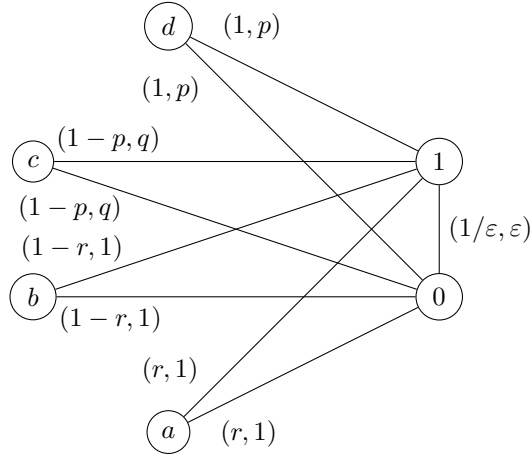

\begin{proof}[Proof of Theorem \ref{thm:upper-bounds-matching-adversarial}\ref{ub-matching-b}]
We say that an edge is \emph{active} if its weight is strictly positive. 
We claim that any online algorithm that outputs a matching can get a total weight of at most 1. 
Indeed, if the algorithm does not select anything incident to $a$, $b$, $c$, or $d$, then its maximum expectation is the expected weight of $\{0,1\}$, which is $1/\varepsilon \cdot \varepsilon=1$. We now study the complementary case, conditioning on the first edge $x$ the algorithm selects, assuming that the algorithm never selects an edge of weight zero.
\begin{enumerate}[label=(\roman*)]
    \item If $x$ is incident to $d$, the algorithm outputs $\{x\}$  obtaining an expected weight of $w(x)=1$.
    \item If $x$ is incident to $c$, then with probability $p$ an edge of weight 1 incident to $d$ can be used to increase the matching; therefore, the expected weight of the algorithm is $w(x)+p\cdot 1=(1-p)+p=1$.
    \item If $x$ is incident to $b$, then the expected achievable weight is $w(x)+q\cdot \max(1-p, 1\cdot p) + (1-q)\cdot 1\cdot p = (1-r) + q(1-p)+(1-q)p = (1-r) + r = 1$.
    \item If $x$ is incident to $a$, then the algorithm can create a matching with an edge incident to $b$, or wait until later. In the first case, it receives a weight of $w(x)+(1-r)=r + (1-r)=1$, and in the second case, the maximum expected weight it can obtain is $w(x)+q\cdot \max(1-p, 1\cdot p) + (1-q)\cdot 1\cdot p = r + q(1-p)+(1-q)p = 2r < 1$.
\end{enumerate}

Then, the expected weight of any algorithm is at most 1.
We now compute the expected weight of the optimal matching $\opt$ as follows.
If the edge $\{0,1\}$ is active, then the optimal matching consists only of the edge $\{0,1\}$, and so $\OPT=1/\varepsilon$. 
Therefore, for the rest of the analysis, we assume that $\{0,1\}$ is not active. 
Furthermore, we have the following facts for any realization: every matching using at most one edge has weight at most 1,
there is always a matching of weight 1 (taking the deterministic edges $\{a,0\}$ and $\{b,1\}$), and every matching with two nonzero edges uses different weights. 
Then, there are only six possible weights for matching of maximum weight. 
Using the inequalities of \eqref{eqn:bound1}, these values can be sorted in decreasing order as $1+(1-p) > 1+(1-r) > 1+r > (1-p)+(1-r) > (1-p)+r > 1$. 
Below, we study the probability that the optimal matching has a weight equal to each of these six cases. 
Note that this depends only on the status of the edges incident to $c$ and $d$.

\begin{enumerate}[label=(\roman*)]
    \item $\opt = 1+(1-p)=2-p$ in two cases: both edges incident to $d$ and at least one incident to $c$ are active; or exactly one edge incident to $d$ is active, and the only edge incident to node $c$ that can complete the matching is also active. 
    In total, this event occurs with probability $p^2(1-(1-q)^2)+ 2p(1-p)q$. 
    \item $\opt = 1+(1-r)=2-r$ in two cases: both edges incident to $d$ are active and no edge incident to $c$ is active; or if exactly only one edge incident to $d$ is active and the edge incident to $c$ that can complete the matching is not active. In total, this event occurs with probability $p^2(1-q)^2+2p(1-p)(1-q)$.
    \item \label{case3} $\opt$ is never equal to $1+r$, since whenever an edge of weight 1 is active, we can always complete the matching with an edge of weight $1-r$ incident to $b$, obtaining a matching of weight $1+(1-r)>1+r$.
    \item $\opt = (1-p)+(1-r)=2-p-r$ if both edges incident to $d$ are not active, and at least one edge incident to $c$ is active. This event occurs with probability $(1-p)^2(1-(1-q)^2)$
    \item $\opt$ is never equal to $(1-p)+r$ for the same reason as in case \ref{case3}.
    \item $\opt=1$, which occurs when all the edges incident to $c$ and $d$ are not active. This happens with probability $(1-p)^2(1-q)^2$
\end{enumerate}

Using $r=(1-p)q+(1-q)p$, and taking the limit when $\varepsilon\to 0$, we see that the optimum matching has an expected weight that converges to
\begin{equation*}
\begin{split}
&\lim_{\varepsilon\to 0} \varepsilon \cdot (1/\varepsilon) + (1-\varepsilon)\Bigl((2-p) (p^2(1-(1-q)^2)+ 2p(1-p)q)\\
&+  (2-r) (p^2(1-q)^2+2p(1-p)(1-q))+ (2-p-r) (1-p)^2(1-(1-q)^2) + 1 (1-p)^2(1-q)^2 \Bigr)\\
&= 2+2 p-3 p^2+p^3+2 q-10 p q+15 p^2 q-6 p^3 q-3 q^2+14 p q^2-19 p^2 q^2\\
&+6 p^3 q^2+q^3-4 p q^3+4 p^2 q^3.
\end{split}
\end{equation*}
Denote by $F(p,q)$ the polynomial obtained. 
The maximum of $F$ subject to the conditions in \eqref{eqn:bound1} is reached at $(p^*,q^*)\approx (0.299130, 0.364352)$. 
Thus, the competitive ratio is at most $1/F(p^*,q^*)\approx 0.418928$.
\end{proof}

\section{The Rolling Particle Approach}\label{sec:rolling}

To motivate the ideas behind our new random-order upper bounds (Theorem \ref{thm:upper-bounds-matching}), we revisit the upper bound of $\sqrt{3}-1$ for the prophet-secretary single-choice problem from \cite{CorreaSZ2021} to present our continuous time analysis. Suppose that $\lambda \geq 1$ is a fixed constant, and we have a very large number $m$ of \emph{unlikely} agents, each of them setting the weight of (the same) item $x$ to be $\lambda m$ with probability $1/m^2$ and zero otherwise. 
We also have an extra \emph{important} agent $A$ that weights item $x$ deterministically and equal to one.

By conditioning on the existence of an unlikely agent presenting a non-zero weight, the offline optimal weight is equal to $(1-(1-1/m^2)^m)\cdot \lambda m + (1-1/m^2)^m \cdot 1\to \lambda+1$ when $m\to \infty$.
If an online algorithm that sees the agents in random order ever sees an active unlikely agent, it must take it, as it would earn a profit of $\lambda m$. Then, the algorithm has only one decision: to skip or take the deterministic unit weight shown by the important agent $A$ when it appears (i.e., assigning the item to $A$). 
This decision depends on whether the expected weight that the algorithm can achieve on the remaining instance (i.e., from agents that have not arrived) is greater or smaller than one. 

Suppose every agent $a$ selects a position $T(a)$ uniformly at random in the interval $[0,1]$. 
In the limit, we can visualize the algorithm as a small rolling particle moving from left to right in the interval $[0,1]$, observing the agents at the positions they selected. 
For the instance we described, if we ignore the deterministic important agent $A$, in the limit, the algorithm recovers an expected weight of $\lambda$ from the unlikely agents as it moves from zero to one. The algorithm should take the unit deterministic weight from $A$ if and only if one is greater than the expected weight that the algorithm would get by skipping $A$, which is $\lambda$ times $1-T(A)$ (see Figure \ref{fig:prophet-secretary}). There exists a time $T^{\star}=1-1/\lambda$ such that if $T(A)$ is smaller than $T^{\star}$ then it is more convenient to skip $A$, and if $T(A)$ is larger, then it is more convenient to take the unit weight from $A$. 
If it skips $A$, the expected weight recovered is $\lambda$, and otherwise, the expected weight recovered is $\lambda T(A)+1$.
Therefore, the optimal algorithm recovers an expected weight of
\begin{align*}
\int_{0}^{T^{\star}} \lambda \mathrm dx + \int_{T^{\star}}^1 (\lambda x+1) \mathrm dx &= \lambda T^{\star}+ \lambda\cdot \frac{1-(T^{\star})^2}{2} + 1-T^{\star} =\lambda+\frac{1}{2\lambda}.
\end{align*}
Since the optimal offline expected weight is $\lambda+1$, the competitiveness of any algorithm on this instance is at most $(\lambda+1/(2\lambda))/(\lambda+1)$, which is minimized by setting $\lambda=(1+\sqrt{3})/2\approx 1.36603$, obtaining an upper bound of $\sqrt{3}-1\approx 0.732$.
We call this limit analysis a {\it rolling particle} approach, and we use it to present our new upper bounds for the 2-bounded auction problem.

\begin{figure}[H]
\centering
\begin{tikzpicture}[scale=2]
  % Draw the line
  \draw (0,0) -- (5,0);

  % Draw the obstacle
\filldraw[black] (3,0) circle[radius=1pt];
  % Draw the ball at the initial position
  \filldraw[blue] (0,0) circle[radius=1pt];

  % Draw the ball after passing the obstacle
  \filldraw[blue] (5,0) circle[radius=1pt];

  % Label the initial position
  \node[below=0.3cm] at (0,0) {\small  0};

  % Label the final position
  \node[below=0.3cm] at (5,0) {\small  1};

% Label the obstacle
  \node[below=0.3cm] at (3,0) {$x$};

% Draw the ground arrow line
  \draw[->, gray, dashed] (0,0.4) --  node[above]{$\lambda\cdot T(A)+1$} (3,0.4) -- (3,0.7);
  \draw[->, gray, dashed] (0,0.2) --  node[above, near end]{$\lambda$} (5,0.2) ;
\end{tikzpicture}
\caption{Rolling particle representation for the $\sqrt{3}-1$ upper bound in the prophet-secretary single-choice problem. When the algorithm sees agent $A$ at time $x=T(A)$, it has to decide whether to stop or continue. The expected earned weight of stopping is $\lambda x+1$, and the expected weight for not stopping is $\lambda$. If $x>T^{\star}$, the algorithm decides to stop.}
\label{fig:prophet-secretary}
\end{figure}
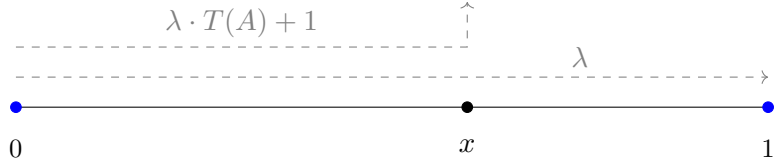

\section{Upper Bounds for Random Order and IID: Proof of Theorem \ref{thm:upper-bounds-matching}}\label{sec:2bca-ro}

In what follows, we provide our new upper bound for the prophet-secretary matching problem.
Let $\lambda\geq 1$ be a constant to be determined later and consider $m+2$ agents. Consider the graph with four vertices $\{a, b, c, d\}$, with one edge $\{a,b\}$, one edge $\{c,d\}$ and $m$ copies of edge $\{b,c\}$. 
Namely, this graph is a path with length three from node $a$ to $d$, but the middle edge $\{b,c\}$ is copied $m$ times.
Each edge is associated with a unique agent, as follows:
\begin{enumerate}
    \item Agent $L$ is interested in edge $\{a,b\}$ and deterministically puts a weight of one on it.
    \item Agent $R$ is interested in edge $\{c,d\}$ and deterministically puts a weight of one on it.
    \item Each of the other $m$ agents is unlikely: It sets the weight of its copy of $\{b,c\}$ to $\lambda m$ with probability $1/m^2$, and zero otherwise.  
\end{enumerate}
We refer to agents $L$ and $R$ as lateral agents. 
In this instance, the expected offline optimum, in the limit, gets all the weight from the unlikely agents associated with the middle edge copies and all the profit from the two lateral agents, that is, $\lambda+2$ overall (see Figure \ref{fig:prophet-secretary-2BCA}).

\begin{proof}[Proof of Theorem~\ref{thm:upper-bounds-matching}\ref{ro-ub-matching-a}]
The optimal algorithm behaves like a rolling particle that moves from zero to one, recovering a weight from the unlikely agents that is $\lambda$ times the distance traveled before stopping. 
As soon as it sees the first lateral agent, it has to decide whether to stop rolling and keep the weight from both lateral agents or to continue rolling. 
If it decides to continue, then when it sees the second lateral agent, it has to decide whether to stop and take it (getting a weight of one) or to continue rolling. 
From this, the optimal algorithm is defined by two thresholds $s,t$ with $0 \leq s \leq t \leq 1$. 
The particle stops on the second arriving lateral agent if and only if it has not stopped before, and the second arrives after time $t$. 
The particle stops on the first arriving lateral agent if and only if it arrives after time $s$.

Call $x\leq y$ to the random arrival times of the two lateral agents. The probability density of $x$ is given by minimum of two uniforms in $[0,1]$, and is equal by $2(1-x)$, and conditioned on $x$, the time $y$ is uniform in $[x,1]$. 
Therefore, given $s$ and $t$, the expected weight of the algorithm is
\begin{align*}
\ALG(s,t,\lambda)&=\int_{0}^{s}\left( \int_{x}^{t}  \frac{\lambda}{1-x}\mathrm dy + \int_{t}^{1}\frac{\lambda y+1}{1-x}\mathrm dy \right) 2(1-x)\mathrm dx + \int_{s}^1 (\lambda x+2)\cdot 2(1-x)\mathrm dx\\
&= \lambda s(2t-s) + s(1-t)(2+\lambda+\lambda t) +\frac{1}{3}(1-s)^2(6+\lambda+2\lambda s).
\end{align*}
Then, the competitiveness of the algorithm with thresholds $s$ and $t$ is $G(s,t,\lambda)=\ALG(s,t,\lambda)/(\lambda+2)$. Given $\lambda$, the we maximize $G(s,t,K)$ in all $0\leq s\leq t\leq 1$ to find the optimal thresholds $\overline s$ and $\overline t$.
By first-order conditions, and as long as $t\not\in \{0,1\}$, we impose $\partial_t G(\overline s,\overline t,\lambda)=0$ and get $\overline t=1-1/\lambda$, which is in $[0,1]$ if $\lambda \geq 1$. 
Then, by imposing $\partial_sG(\overline s,\overline t,\lambda)=0$ and $\overline s\leq \overline t$, we get $\overline s=1-\gamma/\lambda$, with $\gamma = 1+1/\sqrt{2}$, as long as $\lambda \geq \gamma$. 
Then, the best competitiveness that an algorithm can obtain for a given $\lambda\geq \gamma$ is $G(\overline{s},\overline t,\lambda)=(1+\sqrt{2}+3\lambda+3\lambda^3)/(3\lambda^2(\lambda+2))$.
This expression is minimized by setting $\lambda\approx 2.27861$, attaining a value of $0.671355$.
\end{proof}
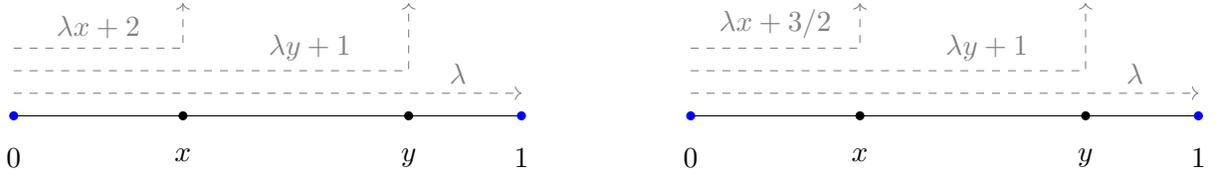
\begin{figure}[H]
\centering
\begin{tikzpicture}[scale=1.5]
  % Draw the line
  \draw (0,0) -- (4.5,0);

  % Draw the obstacle
  \filldraw[black] (1.5,0) circle[radius=1pt] (0.08);
  \filldraw[black] (3.5,0) circle[radius=1pt] (0.08);
  % Draw the ball at the initial position
  \filldraw[blue] (0,0) circle[radius=1pt] (0.1);

  % Draw the ball after passing the obstacle
  \filldraw[blue] (4.5,0) circle[radius=1pt] (0.1);

  % Label the initial position
  \node[below=0.3cm] at (0,0) {0};

  % Label the final position
  \node[below=0.3cm] at (4.5,0) {1};

% Label the obstacle
  \node[below=0.3cm] at (1.5,0) {$x$};
  \node[below=0.3cm] at (3.5,0) {$y$};

% Draw the ground arrow line
  \draw[->, gray, dashed] (0,0.6) --  node[above]{$\lambda x+2$} (1.5,0.6) -- (1.5,1);
  \draw[->, gray, dashed] (0,0.4) --  node[above, near end]{$\lambda y+1$} (3.5,0.4) -- (3.5,1) ;
  \draw[->, gray, dashed] (0,0.2) --  node[above, very near end]{$\lambda$} (4.5,0.2) ;

  \begin{scope}[xshift=6cm]
   % Draw the line
  \draw (0,0) -- (4.5,0);

  % Draw the obstacle
  \filldraw[black] (1.5,0) circle[radius=1pt] (0.08);
  \filldraw[black] (3.5,0) circle[radius=1pt] (0.08);
  % Draw the ball at the initial position
  \filldraw[blue] (0,0) circle[radius=1pt] (0.1);

  % Draw the ball after passing the obstacle
  \filldraw[blue] (4.5,0) circle[radius=1pt] (0.1);

  % Label the initial position
  \node[below=0.3cm] at (0,0) {0};

  % Label the final position
  \node[below=0.3cm] at (4.5,0) {1};

% Label the obstacle
  \node[below=0.3cm] at (1.5,0) {$x$};
  \node[below=0.3cm] at (3.5,0) {$y$};

% Draw the ground arrow line
  \draw[->, gray, dashed] (0,0.6) --  node[above]{$\lambda x+3/2$} (1.5,0.6) -- (1.5,1);
  \draw[->, gray, dashed] (0,0.4) --  node[above, near end]{$\lambda y+1$} (3.5,0.4) -- (3.5,1) ;
  \draw[->, gray, dashed] (0,0.2) --  node[above, very near end]{$\lambda$} (4.5,0.2) ;
  \end{scope}
  
\end{tikzpicture}
\caption{Instances in the prophet-secretary model for matching (left) and general 2-bounded auctions (right). 
The expected weight recovered by stopping on the first lateral agent at time $x$ is $\lambda x+2$ for matching and $\lambda x + 3/2$ for auctions. 
The expected weight recovered by stopping on the second lateral agent at time $y$ is $\lambda y+1$, and for not stopping is $\lambda$.}
\label{fig:prophet-secretary-2BCA}
\end{figure}

We now improve on the previous bad instance for prophet-secretary matching by working on the general setting of 2-bounded combinatorial. 
Let $\lambda\geq 1$ be a constant, and suppose that we have $m+2$ agents. 
Consider the graph with four vertices $\{a, b, c, d\}$ and the five edges $\{a,b\}$, $\{b,c\}$, $\{c,d\}$, $\{a,d\}$ and $\{b,d\}$.
Namely, the graph is a 4-cycle with an extra diagonal edge $\{b,d\}$.
The agents are described as follows:
\begin{enumerate}
    \item Agent $L$ shows with probability 1/2, edges $\{a,b\}$ and $\{b,c\}$ with weight 1 each and 0 on the rest. With probability 1/2, it shows edges $\{c,d\}$ and $\{a,d\}$ with weight 1 and 0 on the rest.
    \item Agent $R$ shows with probability 1/2, edges $\{b,c\}$ and $\{c,d\}$ with weight 1 each and 0 on the rest. With probability 1/2, it shows edges $\{a,d\}$ and $\{a,b\}$ with weight 1 and 0 on the rest.
    \item Each of the other $m$ agents is unlikely: With probability $1/m^2$, it sets the weight of the diagonal $\{b,d\}$ to $\lambda m$ and 0 to the rest. Otherwise, it shows the zero weight function.
\end{enumerate}

Agents $L$ and $R$ are called lateral agents. 
Observe that all edges of the graph are adjacent to the diagonal $\{b,d\}$. 
Therefore, if an algorithm assigns the diagonal edge at any time, it cannot increase the matching later. 
Furthermore, observe that no matter what pair of edges $\{e,f\}$ agent L shows (with non-zero weight), agent R will show an edge $z\not\in \{e,f\}$ such that $\{e,z\}$ or $\{f,z\}$ is a matching in the graph. 
Then, in the offline problem, we can always get a weight of 2 from agents L and R. 
In the optimal offline solution, we get a weight $\lambda m$ if at least one unlikely agent is active (i.e., shows a non-zero weight), and a profit of 2 otherwise (by selecting the appropriate edges). 
Hence the expected value of the offline optimal solution is
$(1-(1-1/m^2)^{m})\lambda m + (1-1/m^2)^m 2 \to \lambda+2$ when $m\to \infty$.

\begin{proof}[Proof of Theorem~\ref{thm:upper-bounds-matching}\ref{ro-ub-matching-b}]
An online algorithm, represented by a rolling particle that moves from zero to one, as soon as it sees the first lateral agent, it has to decide whether to stop rolling and select an edge from the first lateral agent or to continue. 
The expected weight for stopping at the first lateral agent is 3/2: it gets one unit from the current agent plus one times the probability that the other lateral agent shows a set of edges that contains one that is disjoint from the one already selected, which is 1/2. If it decides to continue, it keeps collecting weight from the uncertain agents until it sees the second lateral agent. 
Then it has to decide whether to stop (getting an extra weight of 1) or to continue until the end of the line (see Figure \ref{fig:prophet-secretary-2BCA}).

As for the prophet-secretary matching case, there are two thresholds $s,t$ with $0\leq s\leq t\le 1$ that the algorithm sets. 
The particle stops on the second lateral agent if and only if it has not stopped before, and the second one arrives after time $t$. 
The particle stops on the first arriving lateral agent if and only if it arrives after time $s$. 
Given $s$ and $t$, the expected weight of the algorithm solution is
\begin{align*}
\ALG(s,t,K)&=\int_{0}^{s}\left( \int_{x}^{t} \frac{\lambda}{1-x}\mathrm dy + \int_{t}^{1}\frac{\lambda y+1}{1-x}\mathrm dy \right) 2(1-x)\mathrm dx + \int_{s}^1 (\lambda x+3/2)\cdot 2(1-x)\mathrm dx\\
&= Ks(2t-s) + s(1-t)(2+\lambda+\lambda t) +\frac{1}{6}(1-s)^2(9+2\lambda+4\lambda s).
\end{align*}
The competitiveness of the algorithm with thresholds $s$ and $t$ is $G(s,t,\lambda)=\ALG(s,t,\lambda)/(\lambda+2)$. 
Given $\lambda$, the best algorithm is obtained by maximizing $G(s,t,\lambda)$ over $0\leq s\leq t\leq 1$ so we find the optimal thresholds $\overline{s}$ and $\overline{t}$. 
Imposing $\partial_2G(\overline{s},\overline{t},K)=0$, we deduce that $\overline{t}=1-1/K$, which is in $[0,1]$ if $\lambda\geq 1$. Then, imposing $\partial_1 G(\overline{s},\overline{t},K)=0$ and $\overline{s}\leq \overline{t}$, one deduces that $\overline{s}=1-1/\lambda=\overline{t}$. 
Therefore, for a given $\lambda$, the best competitiveness achievable is 
$G(1-1/\lambda,1-1/\lambda,\lambda)=(6\lambda^3+6\lambda-1)/(6\lambda^2(\lambda+2)).$
This expression is minimized by setting $\lambda\approx 1.36987$, attaining a value of $0.596774$.
\end{proof}

In what follows, we construct the upper bound in the prophet IID setting.
For each positive integer value $q$, consider the 2-bounded auction instance $I_q$ in the prophet IID model defined as follows: The graph is a 4-cycle, we have $q$ agents, and each one selects their weight function independently, according to the following distribution: two edges in the 4-cycle are selected randomly, and we put a weight of one on both, and a weight of zero on the rest.
In our analysis, we use the following lemma.
\begin{lemma}\label{lem:2bca-iid}
The following holds:
\begin{enumerate}[itemsep=0pt,label=\normalfont{(\alph*)}]
    \item Every fixed edge in the 4-cycle appears with weight one for any agent with probability 1/2.
    \item For every $q\geq 1$, the expected weight of the solution for any online algorithm in the instance $I_q$ is upper-bounded by $2-2^{-q+1}$.
    \item The expected size $F(q)$ of a maximum matching in the instance $I_q$ is given as follows: $F(1)=1, F(2)=11/6$ and for $q\geq 3$, we have $F(q)=2-4/6^q$.
\end{enumerate}\end{lemma}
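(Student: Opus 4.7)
My plan is to handle the three items independently, with parts (a) and (b) taking only a few lines and part (c) reducing to a finite case analysis over six bundle types. Part (a) is immediate: there are $\binom{4}{2}=6$ equally likely $2$-subsets of the four edges of the $4$-cycle, and exactly $\binom{3}{1}=3$ of them contain any prescribed edge, giving probability $1/2$.

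For part (b), I would prove the bound by induction on $q$, letting $V_q$ denote the supremum of $\EE[\ALG]$ over online algorithms on $I_q$. The base case $V_1=1=2-2^{0}$ is immediate since a single agent always presents two active edges and the algorithm can take at most one. For the inductive step, I analyze the algorithm's choice on seeing the first agent: either it skips, contributing at most $V_{q-1}\le 2-2^{2-q}$ by the inductive hypothesis, or it commits to some edge $e$ from that agent's bundle. The $4$-cycle contains a unique edge $e^\star$ disjoint from $e$, so any further contribution to the matching must come from $e^\star$ appearing in a later bundle. By part (a) and independence of the remaining $q-1$ bundles, the probability that $e^\star$ is active for at least one of them is $1-(1/2)^{q-1}$, so this branch gives at most $1+(1-2^{1-q})=2-2^{1-q}$. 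Taking the maximum of the two branches preserves the claimed bound, since $2-2^{2-q}<2-2^{1-q}$.

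For part (c), I would classify the six equally likely bundles as the four \emph{star bundles} $B_v$ (the two edges incident to a vertex $v\in\{a,b,c,d\}$) and the two \emph{matching bundles} $M_1=\{\{a,b\},\{c,d\}\}$, $M_2=\{\{b,c\},\{a,d\}\}$. Since the $4$-cycle admits matchings of size at most two, for $q\ge 2$ one has $F(q)=2-\Pr[\text{max matching}=1]$. Call two bundles \emph{incompatible} if no choice of one edge from each gives a vertex-disjoint pair; then the max matching equals $1$ exactly when every pair of agents has incompatible bundles. A direct enumeration over the six bundle types shows that the only incompatibilities are the same-star pairs $(B_v,B_v)$ and the pair $\{M_1,M_2\}$. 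For $q=2$ this yields $4+2=6$ incompatible ordered configurations out of $36$, so $\Pr[\text{max matching}=1]=1/6$ and $F(2)=11/6$. For $q\ge 3$, the presence of some $M_i$ would force every other bundle to be the opposite $M_j$, but then there would be at least two copies of $M_j$, which are compatible --- a contradiction. Hence all bundles must be stars, and since distinct stars are compatible, they must all coincide. This gives $4$ configurations of probability $6^{-q}$ each, so $\Pr[\text{max matching}=1]=4/6^q$ and $F(q)=2-4/6^q$.

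The main technical step is the compatibility enumeration in part (c), in particular the two observations that any two distinct stars are compatible (by picking the edges opposite to their shared vertex) and that each matching bundle is compatible with each star. Once these are granted, the counting collapses immediately. Parts (a) and (b) pose no real obstacle; the induction in (b) is one line modulo the observation that a committed edge admits exactly one disjoint completion in the $4$-cycle.
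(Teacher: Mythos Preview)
Your proposal is correct and follows essentially the same approach as the paper: part~(a) is identical, your induction in part~(b) is just the recursive form of the paper's conditioning on the first selection time $i$ (both hinge on the unique disjoint completion and part~(a)), and your ``star/matching bundle'' compatibility enumeration in part~(c) is exactly the paper's ``consecutive/opposite edges'' case analysis, packaged a bit more systematically. The only cosmetic difference is that you frontload the pairwise compatibility table and then read off $q=2$ and $q\ge 3$ from it, whereas the paper argues the two cases separately; the underlying case checks are the same.
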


\begin{proof}
Let $e$ be any edge of the 4-cycle $C$ and let $x, y, z$ be the other edges, among the $\binom{6}{2}=3$ subsets of 2 edges of $C$, 3 contains $e$, namely $\{e,x\}, \{e,y\}, \{e,z\}$ so the probability that $e$ is shown with weight one by an agent is $3/6=1/2$. This shows the first claim.

To see the second claim, consider any online algorithm that never assigns zero weight edges and condition on the step $i \leq q$ in which it assigns for the first time an edge to an agent (if this never happens, then the solution has weight 0). The maximum weight that the algorithm may recover is then 1 plus the probability that the edge $e$ opposite to the edge chosen at time $i$ appears in any of the $q-i$ next steps; by the previous claim, this is $1 + (1-(1/2)^{q-i})=2-2^{-q+i}$. This is maximized for $i=1$, with a guarantee of $2-2^{-q+1}$.

We analyze the third part by cases. If $q=1$, then the optimum matching has only 1 edge.
Now, let $q=2$. 
There are two non-isomorphic sets of two edges that the first agent may receive: either both of them have one node, say $x$, in common, or both form a perfect matching $M$. Focus on the first case, and suppose that the first agent shows consecutive edges $f$ and $g$. Then, as long as the second agent shows any edge $e$ outside $f$ and $g$ we can form a perfect matching of weight 2, assigning $e$ to the second agent and its opposite edge (which is in $\{f,g\}$) to the first. This event happens as long as the second agent does not show $\{f,g\}$, so 5 out of 6 times, the optimum matching has size 2, and in the other case, it has size 1.  Focus on the second case now, and suppose that the first edge shows opposite edges $f$ and $h$. As long as the second edge shows any edge $e$ in $\{f,h\}$, we can form a perfect matching of weight 2, assigning $e$ to the second edge and its opposite edge to the first. Once again, this event happens as long as the second agent does not show exactly the perfect matching disjoint from $\{f,h\}$, which happens with probability 5/6. So, regardless of the realization for the first agent, there is a probability of exactly 1/6 that the optimum assignment is 1 (and 5/6 that it is 2). So $F(2)=2 + 1/6 \cdot 1 = 11/6 =2-1/6$.

Now suppose that $q\geq 3$. We observe first that if at least one agent $a$ shows two opposite edges (say $\{f,h\}$), then there is an assignment of weight 2. Indeed, let $b$ and $c$ be the other two agents and suppose that there is no assignment of weight 2 using an edge from $a$. The only way for this to happen is that $b$ and $c$ show exactly the other two edges of the 4-cycle, say $\{f',h'\}$. But then there is an assignment of weight 2 using one edge from $b$ and one from $c$, and we are done. So assume now that every agent shows pairs of consecutive edges and let $a$ be one of the agents. We saw in the proof for $q=2$ that as long as another agent shows any pair of edges different than what $a$ showed, there is a way to have an assignment of weight 2. It follows that the only way that there is no assignment of weight 2 with $q\geq 3$ is that every agent shows the same pair of edges and, furthermore, that the common pair of edges are consecutive. There are only 4 configurations (one per each pair of consecutive edges) out of the $6^q$ possibilities for all the agents in which this happens. So the expected weight of the optimum is $F(q)=(4/6^q) 1 + (1-4/6^q) 2 = 2-4/6^q$ as claimed.
\end{proof}

The instances $I_q$ are not sufficient to get a sufficiently strong upper bound, since for $q=1$ and $q=2$, we get a competitiveness of $1$ and $9/11$ respectively, and for $q\geq 3$ we get $(2-2^{-q+1})/(2-4/6^q)$ which converges to one as $q\to \infty$. 
To make the instances harder for the online algorithms, we incorporate the possibility that agents show a diagonal of the 4-cycle with very high weight and low probability. 

More specifically, consider the following instance.
We have a graph with five edges, obtained by a 4-cycle and one diagonal. 
Let $\lambda,\theta \geq 1$ be two constants.
We have $m$ agents, and each selects the weight function as follows.
The agent flips two coins independently: the first one is head with probability $1/m^2$, and the second one is head with probability $\theta/m$. 
If the first coin is head, the agent puts a weight of $\lambda m$ on the diagonal edge; 
if the first coin is a tail, the agent puts a weight of zero on the diagonal. 
If the second coin is head, the agent selects two edges from the 4-cycle uniformly at random and puts a weight of one on both and a weight of zero on the other two edges of the 4-cycle; 
if the second coin is a tail, the agent puts a weight of zero on every edge of the 4-cycle.
We denote this instance as $J_{m,\lambda,\theta}$.
Observe that if an agent obtains a tail in the first coin and a head in the second coin, it behaves as an agent in the instances $I_q$.

\begin{proof}[Proof of Theorem \ref{thm:upper-bounds-matching}\ref{ro-ub-matching-c}]
In an instance $J_{m,\lambda,\theta}$, an agent can show 0, 1, 2, or 3 edges with non-zero weight. 
As long as one agent shows a diagonal edge with weight $\lambda m$, which is at least 2 for large $m$, the optimal solution has weight $\lambda m$. 
Otherwise, the optimal solution has an expected weight of $F(q_m)$, where $q_m$ is the number of agents whose second coin is a head. 
Since the second coin is a head with probability $\theta/m$, the random variable $q_m$ follows a Binomial distribution with parameters $(m,\theta/m)$, and therefore,
\begin{align*}
\lim_{m\to \infty}\EE_{q_m}[F(q_m)]&= \lim_{m\to \infty}\PP[q_m=1]+\frac{11}{6}\lim_{m\to \infty}\PP[q_m=2]+\lim_{m\to \infty}\sum_{t=3}^m \left(2-\frac{4}{6^t}\right)\PP[q_m=t]\\
&= e^{-\theta}\theta+\frac{11}{6}\cdot \frac{e^{-\theta}\theta^2}{2}+ \sum_{t=3}^{\infty} \left(2-\frac{4}{6^t}\right)\frac{e^{-\theta}\theta^t}{t!}\\
&=2+\frac{e^{-\theta}}{36}\Big(72-144e^{\theta/6}-12\theta-\theta^2\Big),
\end{align*}
where the first equality holds by Lemma \ref{lem:2bca-iid}, the second equality holds since the Poisson limit theorem implies that $\PP[q_m=t]$ converges to $e^{-\theta}\theta^t/t!$ for every $t\ge 1$ when $m\to \infty$, and the third holds by evaluating the summation and taking limit. Therefore, by denoting as $\opt_m(\lambda,\theta)$ the size of the largest matching in the instance, we have
\begin{align}
\lim_{m\to \infty}\EE[\opt_m(\lambda,\theta)]&=\lim_{m\to \infty}\left(1-\left(1-\frac{1}{m^2}\right)^m\right)\lambda m + \lim_{m\to \infty}\left(1-\frac{1}{m^2}\right)^m\EE_{q_m} [F(q_m)]\notag\\
&= \lambda +2+\frac{e^{-\theta}}{36}\Big(72-144e^{\theta/6}-12\theta-\theta^2\Big).\label{iid-lim-opt}
\end{align}
In what follows, we use the rolling particle approach to analyze the performance of an online algorithm in this family of instances. This time, the algorithm will encounter a random number $q$ of \emph{important agents} distributed according to a Poisson with parameter $\theta$  (i.e., the agents whose second coin is head), and in which it has to decide whether to stop or continue. 

\begin{figure}[H]
\centering
\begin{tikzpicture}[scale=1.5]
  % Draw the line
  \draw (0,0) -- (8,0);

  % Draw the obstacle
  \filldraw[black] (0.3,0) circle[radius=1pt] (0.08);
  \filldraw[black] (4.5,0) circle[radius=1pt] (0.08);
  \filldraw[black] (6,0) circle[radius=1pt] (0.08);

  % Draw the ball at the initial position
  \filldraw[blue] (0,0) circle[radius=1pt] (0.1);

  % Draw the ball after passing the obstacle
  \filldraw[blue] (8,0) circle[radius=1pt] (0.1);

  % Label the initial position
  \node[below=0.3cm] at (0,0) {0};

  % Label the final position
  \node[below=0.3cm] at (8,0) {1};

% Label the obstacle
  \node[below=0.3cm] at (4.5,0) {$t$};

% Draw the ground arrow line

  \draw[->, gray, dashed] (0,0.8) --  node[above]{} (0.3,0.8) -- (0.3,1.4);
  \draw[->, gray, dashed] (0,0.6) --  node[above]{$\lambda t + 2-\exp(-\theta t/2)$} (4.5,0.6) -- (4.5,1.4);
  \draw[->, gray, dashed] (0,0.4) --  node[above, near end]{} (6,0.4) -- (6,1.4) ;
  \draw[->, gray, dashed] (0,0.2) --  node[above, very near end]{$\lambda$} (8,0.2) ;

\end{tikzpicture}
\caption{The number of important agents is distributed according to a Poisson with parameter $\theta$, and the expected waiting time between important agents distributes as an exponential with rate $\theta$. If the algorithm stops at an important agent at time $t$, it gets an expected profit of $\lambda t + 2-\exp(-\theta t/2)$.}
\label{fig:iid2BCA}
\end{figure}
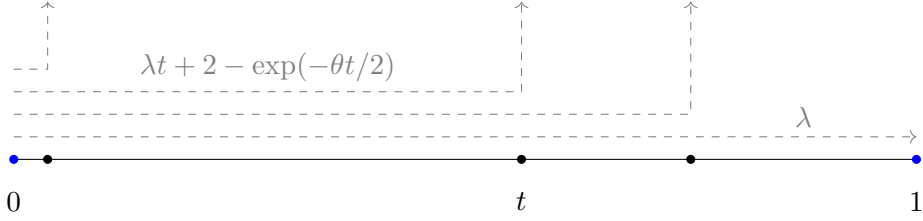

Let $\beta(t)$ be the weight that an algorithm that only has $t$ time units left and that has already selected one edge from the 4-cycle (the particle is at $1-t$ and moves towards 1 at unit speed). 
The only way for the algorithm to get more weight in $(1-t,1]$ is if another agent shows the edge opposite to the selected one. 
As important agents arrive at rate $\theta$ and each has a probability of 1/2 to show the opposite edge, the probability of seeing the opposite edge is the probability that an exponential with rate $\theta/2$ is less than $1-t$. Therefore, we have $\beta(t)= 1-e^{-\theta t/2}$.

Let $\alpha(t)$ be the weight an algorithm can get from the interval $(1-t,1]$ when it did not select any edge from the 4-cycle. 
Then $\alpha$ satisfies $\alpha(0)=0$ and for $t\in (0,1]$ we have 
\begin{align}
\alpha(t)&=\lambda te^{-\theta t}+ \int_{0}^{t}(\lambda x+\max(\alpha(t-x),1+\beta(t-x)) \theta e^{-\theta x}\mathrm dx \notag\\
&=\lambda te^{-\theta t}+\int_0^{t} \lambda x\theta e^{-\theta x}\mathrm dx + \int_{0}^{t}\max(\alpha(t-x),2-e^{-\frac{\theta}{2}(t-x)})\theta e^{-\theta x}\mathrm dx \notag\\
&=\frac{\lambda}{\theta} (1 - e^{-\theta t})+ \theta e^{-\theta t}\int_{0}^{t}\max(\alpha(s),2-e^{-\frac{\theta}{2}s})e^{\theta s}\mathrm ds, \label{iid-alpha}
\end{align} 
where the first equality holds by considering the scenarios in which the agent selects an edge in the cycle or not in the interval $(1-t,1]$, the second equality holds by replacing with the value of $\beta(t-x)$, and the third holds by evaluating the first integral and changing variables in the second.
The curves $\alpha(s)$ and $1+\beta(s)=2-e^{-\theta s/2}$ are increasing, and $0=\alpha(0)\leq 1+\beta(0)=1$.
Let $s^{\star}$ be the first point in $[0,1]$, if it exists, such that $\alpha(s^{\star})=1+\beta(s^{\star})$; if the curves do not intersect, we set $s^{\star}=1$.
Then, for $t\in [0,s^*]$, by evaluating in \eqref{iid-alpha} we have
\begin{align*}
    \alpha(t)&=\frac{\lambda}{\theta} (1 - e^{-\theta t})+ \theta e^{-\theta t}\int_0^t(2-e^{-\theta s/2})e^{\theta s}\mathrm ds =\frac{\lambda}{\theta} (1 - e^{-\theta t})+ 2(1-e^{-\theta t/2}),
\end{align*}
and therefore, if $s^*<1$, by using the above expression and solving the equation $\alpha(s^{\star})=1+\beta(s^{\star})$ we get 
$$s^{\star}=(2/\theta)\ln((\theta + \sqrt{4 \lambda^2 + \theta^2})/(2\lambda)).$$
To obtain this value, we observe that the equation $\alpha(s^{\star})=1+\beta(s^{\star})$ is quadratic in $e^{-\theta s^{\star}/2}$, and from here we can find the two candidate solutions, one of the negative.
On the other hand, 
$$\lim_{t\to s^{\star}}\beta'(t)=\lim_{t\to s^{\star}} \theta e^{-\theta t/2}/2=\lambda \theta/(\theta+\sqrt{4\lambda^2+\theta^2})<\lambda=\lim_{t\to s^{\star}}\alpha'(t),$$
therefore, there exists a $\delta>0$ such that $\alpha(t)\ge 1+\beta(t)$ for every $t\in (s^{\star},s^{\star}+\delta)$. 
Then, from \eqref{iid-alpha}, for every $t\in(s^{\star},s^{\star}+\delta)$ we have
\begin{align*}
    \alpha(t)e^{\theta t}&= \frac{\lambda}{\theta}(e^{\theta t}-1) +\theta\int_{0}^{s^{\star}}(2-e^{-\frac{\theta}{2}s})e^{\theta s}ds + \theta\int_{s^{\star}}^t \alpha(s)e^{\theta s}ds,
\end{align*}
and therefore, by taking the derivative in the above expression, we get that $(\alpha(t)e^{\theta t})'=\lambda e^{\theta t}+\lambda \alpha(t) e^{\theta t}$.
Since $(\alpha(t)e^{\theta t})'=\alpha'(t)e^{\theta t}+\theta \alpha(t)e^{\theta}$, we get $\alpha'(t)=\lambda$ for every $t\in(s^{\star},s^{\star}+\delta)$.
Since $\alpha$ grows linearly from $s^{\star}$, and $\beta(t)$ is concave, we conclude that $\alpha(t)$  never crosses $1+\beta(t)$ again, and that $\alpha(t)=\alpha(s^{\star})+\lambda(t-s^{\star})$ for every $t\in (s^{\star},1]$.
Then, as long as $\lambda$ and $\theta$ are such that $s^{\star}<1$, the algorithm gets an expected weight of $\alpha(1)=\alpha(s^{\star})+\lambda(1-s^{\star})=2-e^{-\theta s^{\star}/2}+\lambda(1-s^{\star})$, which together with \eqref{iid-lim-opt} implies that the competitiveness of the algorithm is upper-bounded by
\begin{align*}
\frac{2-\frac{2\lambda}{\theta+\sqrt{4\lambda^2+\theta^2}} + \lambda\left(1-\frac{2}{\theta}\ln\left(\frac{\theta + \sqrt{4 \lambda^2 + \theta^2}}{2 \lambda})\right)\right)}{ \lambda +2+\frac{e^{-\theta}}{36}\Big(72-144e^{\theta/6}-12\theta-\theta^2\Big)}.
\end{align*}
Numerically, this expression is minimized in $\lambda\approx 1.4737$ and $\theta\approx 2.8224$, attaining a value of $\approx 0.686641$. 
Here $s^{\star}\approx 0.603076\in (0,1)$ as required, and therefore the competitiveness of any online algorithm is upper-bounded by $\approx 0.686641$.
\end{proof}

\noindent{\bf Acknowledgements.} ANID-Chile partially supported this research through grants CMM FB210005, FONDECYT 1231669, FONDECYT 1241846, and ANILLO ACT210005.

\bibliographystyle{acm}
{\small\bibliography{references}}

\end{document}